\DeclareRobustCommand{\lyxsout}[1]{\ifx\\#1\else\sout{#1}\fi}
\newcommand{\lyxaddress}[1]{
	\par {\raggedright #1
	\vspace{1.4em}
	\noindent\par}
}
\theoremstyle{plain}
\newtheorem{thm}{\protect\theoremname}
\theoremstyle{plain}
\newtheorem{lem}[thm]{\protect\lemmaname}
\newenvironment{proof}[1][\protect\proofname]{\par
	\normalfont\topsep6\p@\@plus6\p@\relax
	\trivlist
	\itemindent\parindent
	\item[\hskip\labelsep\scshape #1]\ignorespaces
}{%
	\endtrivlist\@endpefalse
}
\providecommand{\proofname}{Proof}
\def\frontmatter@abstractheading{}
  \providecommand{\lemmaname}{Lemma}
\providecommand{\theoremname}{Theorem}
\providecommand{\lemmaname}{Lemma}
\providecommand{\theoremname}{Theorem}
\begin{document}
\title{Epistemic Odds of Contextuality\\in Cyclic Systems}
\author{Ehtibar N.\ Dzhafarov\textsuperscript{1}, Janne V.\ Kujala\textsuperscript{2},
V\'ictor H.\ Cervantes\textsuperscript{1}}
\maketitle

\lyxaddress{\begin{center}
\textsuperscript{1}Purdue University, \textsuperscript{2}University
of Turku
\par\end{center}}
\begin{abstract}
Beginning with the Bell theorem, cyclic systems of dichotomous random
variables have been the object of many foundational findings in quantum
mechanics. Here, we ask the question: if one chooses a cyclic system
``at random'' (uniformly within the hyperbox of all possible systems
with given marginals), what are the odds that it will be contextual?
We show that the odds of contextuality rapidly tend to zero as the
size of the system increases. The result is based on the Contextuality-by-Default
theory, in which we do not have to assume that the systems are subject
to the no-disturbance/no-signaling constraints.
\end{abstract}

\section{Introduction}

Cyclic systems of dichotomous random variables have played a prominent
role in contextuality research. Suffice it to say that they are the
object of the celebrated Bell theorem \cite{Bell1964,Bell1966,CHSH1969,Fine1982},
as well as the Leggett-Garg theorem \cite{LeggGarg1985,Bacciagaluppi2015,KoflerBrukner2013,SuppesZanotti1981},
Klyachko-Can-Binicio\u{g}lu-Shumovsky theorem \cite{KCBS2008,Lapkiewicz2015,KujDzhLar2015},
and many other results. In this paper we present a simple proof of
the following proposition: the epistemic (Bayesian) probability that
a randomly chosen cyclic system of dichotomous random variables is
contextual tends to zero as its rank $n$ increases. The terms in
this statement are to be rigorously defined later, but the gist is
as follows. Systems of random variables representing measurements
or hypothetical physical events can be classified into contextual
and noncontextual. If a system is of a special kind, called cyclic,
it is represented by a point within an $n$-dimensional hyperbox $\mathbb{B}$
whose edges are determined by the individual (marginal) distributions
of the random variables the system contains. We consider these distributions
fixed, and different meanings of a ``randomly chosen'' cyclic system
correspond to different distributions of points within the hyperbox
$\mathbb{B}$. Here, we assume this distribution to be uniform. A
part of the hyperbox $\mathbb{B}$ forms a noncontextuality polytope
$\mathbb{K}$ consisting of all points representing noncontextual
cyclic systems. The \emph{epistemic probability} of choosing a contextual
system is then
\begin{equation}
\epsilon=1-\frac{\mathsf{vol}\left(\mathbb{K}\right)}{\mathsf{vol}\left(\mathbb{B}\right)},
\end{equation}
where $\mathsf{vol}$ stands for Euclidean hypervolume. Termed differently,
$\epsilon$ is the Bayesian probability of contextuality, with uniform
prior. The precise value of $\epsilon$ depends on the marginal distributions
that define the hyperbox $\mathbb{B}$, but we show that $\epsilon\leq2^{n-1}/n!$,
which tends to zero as $n$ increases. The paper draws on the recent
detailed analysis of cyclic systems given in \cite{DKC2020}.

\section{Definitions}

Our analysis is based on the Contextuality-by-Default (CbD) theory
\cite{DzhCerKuj2017,DzhKujFoundations2017,KujDzhLar2015,KujDzhMeasures}.
A \emph{cyclic system of rank} $n=2,3,\ldots$, is a system
\begin{equation}
\mathcal{R}_{n}=\left\{ \left\{ R_{i}^{i},R_{i\oplus1}^{i}\right\} :i=1,\ldots,n\right\} ,\label{eq:cyclic_n}
\end{equation}
where $\oplus1$ is the cyclic shift $1\mapsto2$, $\dots$, $n-1\mapsto n$,
$n\mapsto1$, and $\left\{ R_{i}^{i},R_{i\oplus1}^{i}\right\} $ are
pairs of jointly distributed random variables. We will assume here
that all $R_{j}^{i}$ are dichotomous, $0/1$-variables (although
any other labeling will be equally acceptable). The matrix below presents
an example of a cyclic system:
\begin{equation}
\begin{array}{|c|c|c|c||c}
\hline R_{1}^{1} & R_{2}^{1} &  &  & c=1\\
\hline  & R_{2}^{2} & R_{3}^{2} &  & c=2\\
\hline  &  & R_{3}^{3} & R_{4}^{3} & c=3\\
\hline R_{1}^{4} &  &  & R_{4}^{4} & c=4\\
\hline\hline q=1 & q=2 & q=3 & q=4 & \mathcal{R}_{4}
\end{array}\label{eq: R4}
\end{equation}
This is a cyclic system of rank 4, describing, e.g., the object of
the best known version of the Bell theorem \cite{Bell1966,CHSH1969}.
The columns of the matrix correspond to properties $q$ being measured,
denoted by the subscripts of the variables. Thus, in the target application
of the Bell theorem, $q=1$ and $q=3$ represent Alice's settings,
while $q=2$ and $q=4$ represent Bob's settings. We generically refer
to $q=j$ in $R_{j}^{i}$ as the \emph{content} of this random variable.
The rows of the matrix correspond to \emph{contexts} in which the
random variables are pairwise recorded, denoted by their superscripts.
So, a random variable in a system is uniquely identified by its content
and its context.

If any two content-sharing (i.e., equally subscripted, measuring the
same property) random variables are identically distributed, i.e.,
if $\left\langle R_{j}^{i}\right\rangle =\left\langle R_{j}^{i'}\right\rangle $
for any $i,i',j$ for which $R_{j}^{i}$ and $R_{j}^{i'}$ exist,
the system is said to be \emph{consistently connected}. This is the
CbD term for compliance with the no-disturbance/no-signaling constraint.
Unlike most approaches to contextuality (an exception being \cite{AmaralDuarteOliveira2018}),
CbD does not need this assumption, so the cyclic systems here are
generally \emph{inconsistently connected}.

The definition of (non)contextuality is based on the notion of a \emph{coupling}.
In system (\ref{eq:cyclic_n}), any two context-sharing random variables
are jointly distributed, but any two random variables belonging to
different contexts are \emph{stochastically unrelated}. The system
$\mathcal{R}_{n}$ as a whole therefore is not jointly distributed.
A coupling of $\mathcal{R}_{n}$ is a set of jointly distributed random
variables (hence, a random variable in its own right)
\begin{equation}
S=\left\{ S_{j}^{i}:j=i,i\oplus1;i=1,\ldots,n\right\} 
\end{equation}
such that 
\begin{equation}
\left[\begin{array}{c}
\left\langle S_{i}^{i}\right\rangle =\left\langle R_{i}^{i}\right\rangle =p_{i}^{i}\\
\left\langle S_{i\oplus1}^{i}\right\rangle =\left\langle R_{i\oplus1}^{i}\right\rangle =p_{i\oplus1}^{i}\\
\left\langle S_{i}^{i}S_{i\oplus1}^{i}\right\rangle =\left\langle R_{i}^{i}R_{i\oplus1}^{i}\right\rangle =p_{i,i\oplus1}
\end{array}\right],i=1,\ldots,n.\label{eq:expectations}
\end{equation}
The system $\mathcal{R}_{n}$ is \emph{noncontextual} if it has a
coupling $S$ in which any two content-sharing random variables coincide
with maximal possible probability. It is easily seen that this means
\begin{equation}
\left\langle S_{i}^{i}S_{i}^{i\ominus1}\right\rangle =\min\left(p_{i}^{i},p_{i}^{i\ominus1}\right)=p^{i,i\ominus1},i=1,\ldots,n,\label{eq:max couplings}
\end{equation}
where $\ominus1$ is the inverse of $\oplus1$. If such a coupling
does not exist, the system is \emph{contextual}. The intuition is
that the contexts in this case ``force'' the content-sharing variables
to be more dissimilar than they can be if taken in isolation. In the
particular case of consistently connected systems, $p_{i}^{i}=p_{i}^{i\ominus1}$,
and one can say, with a slight abuse of language, that in contextual
systems the contexts prevent the content-sharing random variables
from being ``the same.'' This is, essentially, the traditional understanding
of contextuality \cite{Dzh2019,DzhKujFoundations2017}. There is a
simple closed-form criterion of (non)contextuality proved in \cite{KujDzhProof2016}
(and reduced to one proved in \cite{Araujoetal2013} in the special
case of consistently connected systems). In this paper, however, we
make no use of this criterion.

\section{Main Result}

To define epistemic probabilities, one needs a principled way of placing
a system within a space of systems. Here, we follow the scheme we
used in \cite{DKC2020} to define a noncontextuality polytope and
measures of (non)contextuality. Due to the prominent role of \cite{DKC2020}
in our reasoning, we explain notation correspondences with that paper
in several subsequent footnotes. We begin by introducing three probability
vectors.\footnote{The probability vectors $\mathbf{a},\mathbf{b},\mathbf{c}$ below
correspond, respectively, to $\mathbf{p_{l}},\mathbf{p_{b}},\mathbf{p_{c}}$
in \cite{DKC2020}.} With reference to (\ref{eq:expectations}), denote
\begin{equation}
\mathbf{a}=\left(1,p_{1}^{1},p_{2}^{1},\ldots,p_{n}^{n},p_{1}^{n}\right)^{\intercal},
\end{equation}
and
\begin{equation}
\mathbf{b}=\left(p_{12},p_{23},\ldots,p_{n-1,n},p_{n1}\right)^{\intercal}.
\end{equation}
With reference to (\ref{eq:max couplings}), denote
\begin{equation}
\mathbf{c}=\left(p^{1n},p^{21},\ldots,p^{n-1,n-2},p^{n,n-1}\right)^{\intercal}.
\end{equation}
A system represented by $\left(\mathbf{a},\mathbf{b},\mathbf{c}\right)^{\intercal}$
is noncontextual if and only if there is a vector $\mathbf{h}\ge0$
(componentwise) such that 
\begin{equation}
\mathbf{M}\mathbf{\mathbf{h}}=\mathbf{\left(\mathbf{a},\mathbf{b},\mathbf{c}\right)^{\intercal}},
\end{equation}
where $\mathbf{M}$ is an incidence (0/1) matrix whose detailed description
is given in \cite{DzhCerKuj2017,KujDzhMeasures,DKC2020}. For any
set of fixed 1-marginals (which means fixed vectors $\mathbf{a}$
and $\mathbf{c}$), we call the convex polytope
\begin{equation}
\mathbb{K}=\left\{ \left.\mathbf{b}\right|\exists\mathbf{h}\geq0:\mathbf{M}\mathbf{\mathbf{h}}=\mathbf{\left(\mathbf{a},\mathbf{b},\mathbf{c}\right)^{\intercal}}\right\} 
\end{equation}
the \emph{noncontextuality polytope}. We remark in passing that the
$L_{1}$-distance between a point $\mathbf{b}$ and the surface of
the polytope $\mathbb{K}$ is a natural measure of contextuality (if
$\mathbf{b}$ is outside $\mathbb{K}$) or noncontextuality (if $\mathbf{b}$
is in $\mathbb{K}$) of the system represented by $\mathbf{b}$ \cite{DKC2020,KujDzhMeasures}.

It is easily seen that 
\begin{equation}
\max\left(0,p_{i}^{i}+p_{i\oplus1}^{i}-1\right)\leq p_{i,i\oplus1}\leq\min\left(p_{i}^{i},p_{i\oplus1}^{i}\right),i=1,\ldots,n,
\end{equation}
whence $\mathbb{K}$ is inscribed in the hyperbox
\begin{equation}
\mathbb{B}=\prod_{i=1}^{n}\left[\max\left(0,p_{i}^{i}+p_{i\oplus1}^{i}-1\right),\min\left(p_{i}^{i},p_{i\oplus1}^{i}\right)\right].\label{eq: rectangle}
\end{equation}
Let us agree to call the vertices of $\mathbb{B}$ \emph{odd} or \emph{even}
depending on whether its coordinates contain, respectively, an odd
or even number of left endpoints of the intervals in (\ref{eq: rectangle}).
We need the following two results from \cite{DKC2020}.
\begin{lem}[Lemma 13 in \cite{DKC2020}]
\label{lm:KDC2020} Every even vertex of $\mathbb{B}$ belongs to
$\mathbb{K}$ (i.e., represents a noncontextual system).\footnote{\label{fn:P_b}In \cite{DKC2020}, where we also consider other polytopes,
the noncontextuality polytope $\mathbb{K}$ is denoted by $\mathbb{P}_{\mathbf{b}}$
or, if the values of the random variables are encoded as $-1/+1$
rather than 0/1, by $\mathbb{E}_{\mathbf{b}}$. The hyperbox $\mathbb{B}$,
with the $-1/+1$ encoding of values, is denoted by $\mathbb{R}_{\mathbf{b}}$.}
\end{lem}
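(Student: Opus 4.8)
The plan is to prove the statement by exhibiting, for an arbitrary even vertex of $\mathbb{B}$, an explicit coupling $S$ of $\mathcal{R}_{n}$ that satisfies the maximality condition (\ref{eq:max couplings}) and whose within-context expectations $\left\langle S_{i}^{i}S_{i\oplus1}^{i}\right\rangle $ reproduce the coordinates of that vertex. Index the vertex by the set $L\subseteq\left\{ 1,\dots,n\right\} $ of positions at which the \emph{left} endpoint $\max\left(0,p_{i}^{i}+p_{i\oplus1}^{i}-1\right)$ is chosen; by hypothesis $\left|L\right|$ is even. I will attach to each content $j\in\left\{ 1,\dots,n\right\} $ a sign $\sigma_{j}\in\left\{ +1,-1\right\} $, the \emph{same} sign for both copies $R_{j}^{j}$ and $R_{j}^{j\ominus1}$ of that content, introduce a single uniform random variable $U$ on $\left[0,1\right]$, and set
\[
S_{j}^{i}=\mathbf{1}_{\left\{ U\le p_{j}^{i}\right\} }\ \text{if }\sigma_{j}=+1,\qquad S_{j}^{i}=\mathbf{1}_{\left\{ U>1-p_{j}^{i}\right\} }\ \text{if }\sigma_{j}=-1.
\]

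Choosing the sign vector $\sigma$ is the crux, and it is precisely here that the \emph{even} hypothesis is needed. Traversing the cycle $1\to2\to\cdots\to n\to1$, I require $\sigma$ to flip when passing position $i$ exactly when $i\in L$. Fixing $\sigma_{1}=+1$ and propagating this rule determines $\sigma_{2},\dots,\sigma_{n}$ uniquely, and upon returning to content $1$ the rule is consistent with $\sigma_{1}$ if and only if the total number of flips, namely $\left|L\right|$, is even --- which is exactly the definition of an even vertex. Conversely, every admissible $\sigma$ makes an even number of flips around the cycle, so this construction reaches all and only the even vertices.

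It then remains to check three routine facts about $S$. First, the marginals are correct: $\left\langle S_{j}^{i}\right\rangle =p_{j}^{i}$ in both sign cases. Second, since the two copies of each content $i$ carry the same sign, both $S_{i}^{i}$ and $S_{i}^{i\ominus1}$ are nested indicators (at the left end of $\left[0,1\right]$ if $\sigma_{i}=+1$, at the right end if $\sigma_{i}=-1$), whence $\left\langle S_{i}^{i}S_{i}^{i\ominus1}\right\rangle =\min\left(p_{i}^{i},p_{i}^{i\ominus1}\right)$, so (\ref{eq:max couplings}) holds and content-sharing variables coincide with maximal probability. Third, a one-line computation over the four sign combinations within context $i$ gives $\left\langle S_{i}^{i}S_{i\oplus1}^{i}\right\rangle =\min\left(p_{i}^{i},p_{i\oplus1}^{i}\right)$ when $\sigma_{i}=\sigma_{i\oplus1}$ and $\left\langle S_{i}^{i}S_{i\oplus1}^{i}\right\rangle =\max\left(0,p_{i}^{i}+p_{i\oplus1}^{i}-1\right)$ when $\sigma_{i}\neq\sigma_{i\oplus1}$; that is, the within-context expectations land on the right endpoint for $i\notin L$ and on the left endpoint for $i\in L$, so $S$ realizes precisely the chosen vertex. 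Hence that vertex represents a noncontextual system and lies in $\mathbb{K}$.

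The only genuine obstacle is the closing-up argument for $\sigma$; the rest is direct computation with the comonotone/countermonotone (quantile) couplings of pairs of Bernoulli variables. If one prefers the literal polytope formulation, the distribution of this $S$, written on the atoms counted by $\mathbf{M}$, is the required nonnegative $\mathbf{h}$ with $\mathbf{M}\mathbf{h}=\left(\mathbf{a},\mathbf{b},\mathbf{c}\right)^{\intercal}$.
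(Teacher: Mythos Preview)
Your proof is correct. The paper itself does not prove this lemma; it merely imports it from \cite{DKC2020}, so there is no in-paper argument to compare against. Your construction --- a single uniform $U$ together with a sign vector $\sigma$ on contents, propagated around the cycle with a flip at each position in $L$ --- is the natural comonotone/countermonotone (quantile) coupling, and the parity obstruction you identify is exactly why only even vertices are reachable. The marginal, maximality, and within-context computations are all straightforward and correctly stated; in particular, the key identity $\left\langle S_{i}^{i}S_{i\oplus1}^{i}\right\rangle =\max\left(0,p_{i}^{i}+p_{i\oplus1}^{i}-1\right)$ when the signs differ is just the length of the overlap of $\left[0,p_{i}^{i}\right]$ and $\left(1-p_{i\oplus1}^{i},1\right]$. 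This is essentially the standard argument for this kind of statement, and it is likely close in spirit to the proof in \cite{DKC2020}, though that reference may phrase it in the $\pm1$ encoding or via the linear-algebraic description of $\mathbf{M}$ rather than an explicit probabilistic coupling.
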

\begin{lem}[Lemma 10 in \cite{DKC2020}]
\label{lem:KDC2020a} If all 1-marginal probabilities are $1/2$,
then $\mathbb{K}$ is the $n$-demicube whose vertices are even vertices
of $\mathbb{B}$.\footnote{Note that the system in this lemma is a special case of a consistently
connected system. In reference to (\ref{eq: rectangle}), $\mathbb{B}$
in this case is hypercube $\left[0,1/2\right]^{n}$. In \cite{DKC2020},
it corresponds to hypercube $\left[-1,1\right]^{n}$ denoted by $\mathbb{C_{\mathbf{b}}}$.}
\end{lem}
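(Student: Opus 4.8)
The plan is to pass to $\pm1$ correlation coordinates, where the statement largely reduces to a known fact about demihypercubes together with a Bell-type inequality. Setting all $1$-marginals to $1/2$ turns every interval in \eqref{eq: rectangle} into $[0,1/2]$, so $\mathbb{B}=[0,1/2]^{n}$; moreover \eqref{eq:max couplings} becomes $\langle S_{i}^{i}S_{i}^{i\ominus1}\rangle=\min(1/2,1/2)=1/2=\langle S_{i}^{i}\rangle=\langle S_{i}^{i\ominus1}\rangle$, which for $0/1$ variables forces $S_{i}^{i}=S_{i}^{i\ominus1}$ almost surely. Hence a coupling witnessing noncontextuality is just a joint distribution of $n$ binary variables $S_{1},\dots,S_{n}$ (one per content), each with mean $1/2$, and $\mathbb{K}$ is exactly the set of attainable vectors $\mathbf{b}=(\langle S_{1}S_{2}\rangle,\dots,\langle S_{n}S_{1}\rangle)$. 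For $n=2$ this settles the claim at once, since then $\langle S_{1}S_{2}\rangle=\langle S_{2}S_{1}\rangle$ and $\mathbf{b}$ ranges over the main diagonal of $[0,1/2]^{2}$, which is the $2$-demicube. For $n\ge3$, substitute $T_{i}=2S_{i}-1\in\{-1,+1\}$ (so $\langle T_{i}\rangle=0$); the affine map $b_{i}\mapsto 4b_{i}-1=\langle T_{i}T_{i\oplus1}\rangle$ sends $\mathbb{B}=[0,1/2]^{n}$ onto $[-1,1]^{n}$, left endpoints to $-1$ and right endpoints to $+1$, and sends $\mathbb{K}$ onto the set $\mathbb{K}'$ of cyclic correlation vectors realizable by zero-mean $\pm1$ variables; under this map the even vertices of $\mathbb{B}$ become the vertices of $[-1,1]^{n}$ with an even number of $-1$ coordinates, whose convex hull is by definition the $n$-demicube. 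So it suffices to prove $\mathbb{K}'$ equals this demicube.

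For the inclusion $\mathbb{K}'\subseteq{}$demicube I use the pointwise identity $\prod_{i=1}^{n}T_{i}T_{i\oplus1}=\prod_{i=1}^{n}T_{i}^{2}=1$. If $\lambda\in\{-1,+1\}^{n}$ has an odd number of $-1$'s, then the $\pm1$ numbers $\lambda_{i}T_{i}T_{i\oplus1}$ multiply to $-1$, so at least one of them equals $-1$ and $\sum_{i}\lambda_{i}T_{i}T_{i\oplus1}\le n-2$ pointwise; taking expectations yields the cyclic Bell-type inequalities $\sum_{i}\lambda_{i}\langle T_{i}T_{i\oplus1}\rangle\le n-2$. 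Together with the trivial bounds $-1\le\langle T_{i}T_{i\oplus1}\rangle\le1$, these are precisely the facet-defining inequalities of the $n$-demicube inscribed in $[-1,1]^{n}$, so $\mathbb{K}'$ lies inside it. For the reverse inclusion I invoke Lemma~\ref{lm:KDC2020}: every even vertex of $\mathbb{B}$ lies in $\mathbb{K}$, and $\mathbb{K}$, being convex, therefore contains the convex hull of those vertices, i.e., the demicube. (Alternatively one exhibits the couplings directly: an even vertex prescribes around the cycle a chain of constraints $T_{i}=\pm T_{i\oplus1}$ whose signs multiply to $+1$, hence is consistent, and the distribution putting mass $1/2$ on its solution and on the global sign-flip of that solution is a coupling with uniform marginals realizing that vertex.) The two inclusions give $\mathbb{K}=$ demicube.

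The step I expect to be the main obstacle is the identification of the polytope cut out by the cyclic Bell inequalities and the box constraints with the $n$-demicube -- i.e., confirming that those inequalities are exactly its facets, with none missing and none creating spurious vertices. This can be dispatched either by citing the standard facet description of the demihypercube, or self-containedly by showing that every vertex of the inequality system is a $\pm1$ vector with an even number of $-1$'s: a coordinate lying strictly in $(-1,1)$ at a vertex would have to be pinned purely by tight Bell inequalities, and a short parity plus linear-independence argument rules this out. One should also keep in mind that the box inequalities are redundant for small $n$ and become genuine facets only for larger $n$, which changes the facet list but not the vertex set, and hence not the conclusion.
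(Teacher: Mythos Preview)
The paper does not prove this lemma; it is quoted from \cite{DKC2020} (as Lemma~10 there) and used as a black box in the proof of Theorem~\ref{thm: main}. So there is no in-paper argument to compare your attempt against.

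On its own merits your argument is correct and well organized. The reduction, via the identification $S_i^i=S_i^{i\ominus1}=:S_i$ forced by \eqref{eq:max couplings} at uniform marginals, to the set of cyclic product-moment vectors of $n$ jointly distributed fair bits is exactly right; the affine change to $\pm1$ variables and the pointwise parity identity $\prod_i T_iT_{i\oplus1}=1$ give the odd-$\lambda$ Bell inequalities cleanly; and the reverse inclusion via Lemma~\ref{lm:KDC2020} (or your explicit two-atom construction, which is correct and makes the proof independent of that lemma) closes the loop. The only point that is sketched rather than proved --- that the box constraints together with $\sum_i\lambda_i x_i\le n-2$ for odd $\lambda$ carve out precisely the convex hull of the even vertices of $[-1,1]^n$ --- you flag yourself. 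A compact way to finish it, avoiding the vertex/linear-independence case analysis you outline, is the corner-cutting picture: for each odd $w$ the hyperplane $\sum_i w_i x_i=n-2$ meets $[-1,1]^n$ exactly in the simplex on the $n$ even neighbours of $w$, so the cut removes only the vertex $w$ and introduces no new vertices; since any two odd vertices are at Hamming distance $\ge2$, these cuts do not create additional intersections in the interior, and the surviving vertex set is exactly the set of even cube vertices.
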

We are ready now to make our main observation. We define the epistemic
probability of a system falling within a Lebesque-measurable subset
$\mathbb{S}$ of $\mathbb{B}$ as the ratio of their Lebesque measures
(in our case, Euclidean volumes of polytopes).
\begin{thm}
\label{thm: main}For any cyclic system of rank $n$,
\begin{equation}
\epsilon=1-\frac{\mathsf{vol}\mathbb{\mathbb{K}}}{\mathsf{vol}\mathbb{B}}\leq\frac{2^{n-1}}{n!}.
\end{equation}
This upper bound is tight.
\end{thm}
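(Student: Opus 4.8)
The plan is to bracket $\mathbb{K}$ between the affine copy of the $n$-demicube determined by the even vertices of $\mathbb{B}$ (which forces the upper bound) and, for uniform marginals, the same demicube on the nose (which yields tightness).

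First, assuming $\mathbb{B}$ non-degenerate (otherwise $\mathsf{vol}\,\mathbb{B}=0$ and the ratio is undefined), I would use Lemma~\ref{lm:KDC2020}: its $2^{n-1}$ even vertices lie in $\mathbb{K}$, so by convexity $\mathbb{K}$ contains their convex hull $\mathbb{K}_{0}$. The coordinatewise affine bijection $\varphi\colon\mathbb{B}\to[0,1]^{n}$ carrying each interval of (\ref{eq: rectangle}) onto $[0,1]$ sends the even vertices of $\mathbb{B}$ onto precisely the vertices of $[0,1]^{n}$ with an even number of zero coordinates, so $\varphi(\mathbb{K}_{0})$ is the $n$-demicube $\mathbb{D}_{n}$ (the hull of those $2^{n-1}$ cube vertices). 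Since an affine map rescales every volume by the same factor,
\begin{equation}
\frac{\mathsf{vol}\,\mathbb{K}_{0}}{\mathsf{vol}\,\mathbb{B}}=\frac{\mathsf{vol}\,\mathbb{D}_{n}}{\mathsf{vol}\,[0,1]^{n}}=\mathsf{vol}\,\mathbb{D}_{n}.
\end{equation}

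Next I would show $\mathsf{vol}\,\mathbb{D}_{n}=1-2^{n-1}/n!$ by tiling the cube with $\mathbb{D}_{n}$ and corner simplices. For each of the $2^{n-1}$ cube vertices $w$ with an \emph{odd} number of zero coordinates, set $\Delta_{w}=\{x\in[0,1]^{n}:\sum_{i}|x_{i}-w_{i}|\le1\}$, the right-angled corner simplex at $w$, of volume $1/n!$. On the cube the affine function $g_{w}(x):=\sum_{i}(1-2w_{i})x_{i}-(1-\sum_{i}w_{i})$ equals $\sum_{i}|x_{i}-w_{i}|-1$; it is nonnegative at each even vertex $v$ (there it equals $\#\{i:v_{i}\ne w_{i}\}-1\ge0$) and affine, hence nonnegative throughout $\mathbb{D}_{n}$, so $\mathbb{D}_{n}$ misses $\{x\in[0,1]^{n}:\sum_{i}|x_{i}-w_{i}|<1\}$. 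Also, two distinct odd vertices differ in $\ge2$ coordinates, so a common point $x$ of $\Delta_{w},\Delta_{w'}$ satisfies $2\ge\sum_{i}(|x_{i}-w_{i}|+|x_{i}-w_{i}'|)\ge\sum_{i}|w_{i}-w_{i}'|\ge2$ and lies on the boundary of both. Thus $\mathbb{D}_{n}$ and the $2^{n-1}$ simplices $\Delta_{w}$ have pairwise disjoint interiors, and it remains to check that they \emph{cover} $[0,1]^{n}$ --- equivalently that $\mathbb{D}_{n}=\{x\in[0,1]^{n}:g_{w}(x)\ge0\text{ for all odd }w\}$ --- after which $\mathsf{vol}\,\mathbb{D}_{n}=1-2^{n-1}/n!$. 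Combining with the previous display,
\begin{equation}
\epsilon=1-\frac{\mathsf{vol}\,\mathbb{K}}{\mathsf{vol}\,\mathbb{B}}\le1-\frac{\mathsf{vol}\,\mathbb{K}_{0}}{\mathsf{vol}\,\mathbb{B}}=\frac{2^{n-1}}{n!}.
\end{equation}

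For tightness I would take all $1$-marginals equal to $1/2$: then $\mathbb{B}=[0,1/2]^{n}$ and, by Lemma~\ref{lem:KDC2020a}, $\mathbb{K}$ \emph{is} the demicube on the even vertices of $\mathbb{B}$, so $\mathbb{K}=\mathbb{K}_{0}$ and the inequality becomes the equality $\epsilon=2^{n-1}/n!$. The one step that is not routine is the covering claim --- equivalently, that the half-spaces $\{g_{w}\ge0\}$ over odd $w$, together with $0\le x_{i}\le1$, are a complete facet description of $\mathbb{D}_{n}$. I would get this either by quoting the CHSH-type facet description of the uniform-marginal polytope from \cite{DKC2020}, or by proving directly that every vertex of $\{x\in[0,1]^{n}:g_{w}(x)\ge0\ \forall\text{ odd }w\}$ is an even vertex of the cube, e.g.\ by induction on $n$ through the sections $\{x_{n}=t\}$.
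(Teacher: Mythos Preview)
Your proposal is correct and follows the same route as the paper: bound $\mathsf{vol}\,\mathbb{K}$ from below by the volume of the demibox on the even vertices of $\mathbb{B}$ (via Lemma~\ref{lm:KDC2020} and convexity), compute that volume as $\mathsf{vol}\,\mathbb{B}$ minus the $2^{n-1}$ corner simplices at the odd vertices, and invoke Lemma~\ref{lem:KDC2020a} for tightness. The paper states the corner-simplex decomposition in a single sentence without justification, so the covering step you flag as non-routine is treated there as a known geometric fact rather than argued; your version is therefore at least as complete.
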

\begin{proof}
By Lemma \ref{lm:KDC2020}, since $\mathbb{K}$ is convex, it contains
the polytope $\mathbb{D}$ formed by the even vertices of $\mathbb{B}$.
This polytope is an $n$-demibox.\footnote{We use this term as a straightforward generalization of $n$-demicube.}
Within $\mathbb{B}$, it is separated from any odd vertex of $\mathbb{\mathbb{B}}$
by the corner formed by this vertex and the endpoints of the sides
of $\mathbb{B}$ emanating from this vertex. The volume of this corner
is $L_{1}\ldots L_{n}/n!$, where $L_{i}$ is the length of the $i$th
side in (\ref{eq: rectangle}). There are $2^{n-1}$ such corners,
whence
\[
\mathsf{vol}\mathbb{D}=\mathsf{vol}\mathbb{B}-2^{n-1}\frac{L_{1}\ldots L_{n}}{n!}.\tag{*}
\]
Since $\mathsf{vol}\mathbb{B}=L_{1}\ldots L_{n}$, the epistemic probability
for a point randomly chosen within $\mathbb{B}$ to fall within $\mathbb{D}$
is
\[
\frac{\mathsf{vol}\mathbb{D}}{\mathsf{vol}\mathbb{B}}=1-\frac{2^{n-1}}{n!}.\tag{**}
\]
The upper bound stated in the theorem now follows from $\mathsf{vol}\mathbb{\mathbb{K}}\geq\mathsf{vol}\mathbb{D}$.
By Lemma \ref{lem:KDC2020a}, in the case of uniform 1-marginals $\mathbb{K}$
coincides with $\mathbb{D}$. This proves that the upper bound is
tight.
\end{proof}
Although we fixed the 1-marginals in the proof, we see that the epistemic
probability is bounded by an expression that only depends on $n$.
The convergence of this upper bound to zero is quite fast: 
\begin{equation}
\begin{array}{c|c|c|c|c|c|c|c|c|c}
n & 2 & 3 & 4 & 5 & 10 & 15 & 20 & 50 & \ldots\\
\hline \epsilon\leq & 1 & 6.67 & 3.34 & 1.34 & 1.42 & 1.26 & 2.16 & 1.86\\
 &  & \times10^{-1} & \times10^{-1} & \times10^{-1} & \times10^{-4} & \times10^{-8} & \times10^{-13} & \times10^{-50} & \ldots
\\\hline \end{array}
\end{equation}
Depending on the 1-marginals, $\epsilon$ can be much smaller than
$2^{n-1}/n!$.

One could also consider the epistemic probability $\widetilde{\epsilon}$
of choosing a contextual system in the $3n$-dimensional space formed
by all vectors 
\begin{equation}
\widetilde{\mathbf{p}}=\left(\left\langle R_{i}^{i}\right\rangle ,\left\langle R_{i}^{i}R_{i\oplus1}^{i}\right\rangle ,\left\langle R_{i\oplus1}^{i}\right\rangle :i=1,\ldots,n\right).
\end{equation}
Because the upper bound for $\epsilon$ in Theorem \ref{thm: main}
does not depend on 1-marginals, this upper bound also bounds $\widetilde{\epsilon}$,
irrespective of the epistemic distribution of choices of the 1-marginals
(provided the conditional probability of contextuality, given the
1-marginals, is defined as above).

\section{Conclusion}

Our finding is surprising, as it shows that complexity and contextuality
may very well be antagonists. Whether this has deeper interpretational
consequences depends on how much it can be generalized beyond the
class of cyclic systems. One problem is that size of a system is not
a well-defined concept outside specially defined classes of systems.
For cyclic systems, their rank $n$ determines simultaneously the
number of contexts ($n$), the number of contents ($n$), and the
number of random variables ($2n$). Generally, however, the number
of contents and contexts can be incremented independently, and it
is easy to see that our result will not always hold. Consider, e.g.,
a system with two contents and increasing number $n$ of contexts.
It can be shown that the epistemic probability with which such a system
is contextual generally does not decrease with increasing $n$ (e.g.,
within the class of consistently connected systems this epistemic
probability is 1 for all $n\geq2$). Even if we define the size of
a system as the rank of its largest cyclic subsystem, our result still
will not be generalized automatically: as shown in \cite{DKC2020},
a system whose cyclic subsystems are all noncontextual may very well
be contextual (although the epistemic probability of this has not
been investigated). Further work is needed.

What can be said about cyclic systems of random variables that are
not dichotomous? CbD requires that each random variables in an initial
system be replaced with a set of jointly distributed dichotomizations
thereof, and only then subjected to contextuality analysis \cite{DzhCerKuj2017}.
However, a cyclic system thus dichotomized is no longer cyclic. In
the case of categorical random variables with unordered sets of values,
we form all possible dichotomizations, and then we have a simple necessary
condition for noncontextuality, given by the \emph{nominal dominance
theorem} \cite{DzhCerKuj2017}. Using this condition, our computations
show that for cyclic systems the epistemic probability of contextuality
\emph{increases} with the number of unordered values of the random
variables. This, however, is not an easily interpretable result, because
as the set of possible values of random variables increases in cardinality,
it is progressively less feasible to treat it as completely unordered,
and it becomes impossible when the cardinality is infinite. For ordered/structured
sets of values the idea of all possible dichotomizations is no longer
justifiable, and the nominal dominance theorem no longer applies (see
the discussion in the concluding section of \cite{DzhCerKuj2017}).
Further work is needed.

One implication is obvious, however: insofar as one is concerned with
cyclic systems of dichotomous random variables, unless one is guided
by a predictive theory, one is unlikely to stumble upon a contextual
system of a sufficiently large size. Of course, quantum mechanics
is such a predictive theory, which is why we know of the existence
of contextual cyclic systems (although even there, most of experimental
work is confined to cyclic systems of ranks not exceeding 5). However,
our finding poses a serious problem for attempts to seek contextuality
outside quantum mechanics, where such a predictive theory may not
exist.


\begin{thebibliography}{10}
\bibitem{Bell1964}J. Bell, Phys. 1, 195 (1964).

\bibitem{Bell1966}J. Bell, Rev. Modern Phys. 38, 447 (1966).

\bibitem{CHSH1969}J. F. Clauser, M. A. Horne, A. Shimony, and R.
A. Holt, Phys. Rev. Lett. 23, 880 (1969).

\bibitem{Fine1982}A. Fine, J. Math. Phys. 23, 1306 (1982).

\bibitem{LeggGarg1985}A. J. Leggett and A. Garg, Phys. Rev. Lett.
54, 857 (1985).

\bibitem{SuppesZanotti1981}P. Suppes and M. Zanotti, Synth. 48, 191
(1981).

\bibitem{KoflerBrukner2013}J. Kofler and C. Brukner, Phys. Rev. A
87, 052115 (2013).

\bibitem{Bacciagaluppi2015}G. Bacciagaluppi, Int. J. Quant. Found.
1, 1 (2015).

\bibitem{KCBS2008}A. A. Klyachko, M. A. Can, S. Binicio\u{g}lu, and
A. S. Shumovsky, Phys. Rev. Lett. 101, 020403 (2008).

\bibitem{Lapkiewicz2015}R. Lapkiewicz, P. Li, C. Schaeff, N. K. Langford,
S. Ramelow, M. Wie\'{s}niak, and A. Zeilinger, Nature 474, 490 (2011).

\bibitem{KujDzhLar2015}J. V. Kujala, E. N. Dzhafarov, and J.-Å. Larsson,
Phys. Rev. Lett. 115, 150401 (2015).

\bibitem{DKC2020}E. N. Dzhafarov, J. V. Kujala, and V. H. Cervantes,
Phys. Rev. A 101:042119 (+Erratum: Phys. Rev. A 101:069902.)

\bibitem{DzhCerKuj2017}E. N. Dzhafarov, V. H. Cervantes, and J. V.
Kujala, Phil. Trans. Roy. Soc. A 375, 20160389 (2017).

\bibitem{KujDzhMeasures}J. V. Kujala and E. N. Dzhafarov, Phil. Trans.
Roy. Soc. A 377, 20190149 (2019).

\bibitem{DzhKujFoundations2017}E. N. Dzhafarov and J. V. Kujala,
Fortsch. Phys. - Prog. Phys. 65, 1600040 (1-11) (2017).

\bibitem{AmaralDuarteOliveira2018}B. Amaral, C. Duarte, and R. I.
Oliveira, J. Math. Phys. 59, 072202 (2018).

\bibitem{Dzh2019}E. N. Dzhafarov, Phil. Trans. Roy. Soc. A 377, 20190144
(2019).

\bibitem{KujDzhProof2016}J. V. Kujala and E. N. Dzhafarov, Found.
Phys. 46, 282 (2016).

\bibitem{Araujoetal2013}M. Araújo, M. T. Quintino, C. Budroni, M.
T. Cunha, and A. Cabello, Phys. Rev. A 88, 022118 (2013).
\end{thebibliography}
\end{document}